\documentclass[letterpaper, 10 pt, conference]{ieeeconf}  
\usepackage{graphicx}

\usepackage{amsmath,amsthm,amssymb,amsfonts}
\usepackage{xcolor}

\newcommand\rd{\mathrm{d}}
\newcommand\out{\mathrm{out}}

\newtheorem{theorem}{Theorem}[section]

\newtheorem{proposition}[theorem]{Proposition}
\newtheorem{assumption}{Assumption}

\title{\LARGE \bf
A KKL Observer Perspective on Reservoir Computing}

\author{ Anastasia Bizyaeva, Fernando Casta\~nos, Jaime A. Moreno
\thanks{A.B. is with the Sibley School of Mechanical and Aerospace Engineering, Cornell University, Ithaca, NY, USA, 14853 \texttt{anastasiab@cornell.edu}}%
\thanks{F.C. is with the Department of Automatic Control, Cinvestav, Mexico. \texttt{fernando.castanos@cinvestav.mx}}
\thanks{J.A.M. is with the Instituto de Ingenier\'ia, Universidad Nacional Aut\'onoma de M\'exico, Mexico. \texttt{JMorenoP@ii.unam.mx}}
}%

\begin{document}

\IEEEoverridecommandlockouts
\maketitle

\begin{abstract} Reservoir computing (RC) is a machine learning technique for data-driven modeling of dynamics for forecasting and control, primarily studied in computer science and physics literature with promising applications in neural network learning, physical computing, and neuroscience. Why reservoirs learn and how to choose good reservoir architectures are considered important open questions. We show that the RC problem is mathematically an extension of a classic problem in systems and control theory, the Kazantzis-Kravaris-Luenberger (KKL) observer design problem. As a consequence, many of the questions considered open for RC stand to benefit from a large body of theory in the mature KKL literature, non-exhaustively including on questions of embedding, transverse stability, local and global uniqueness guarantees, and effective data-driven solution constructions. Elaborating on this connection, we show that the surprising forecasting ability of reservoirs is in fact a direct consequence of the well-known observer internal model principle, derive an upper bound on the prediction error over a fixed forecast horizon, and provide a partial explanation for why linear readout training in RC works reasonably well. 
This work illustrates how classical ideas from systems and control can provide strong theoretical backing and open new questions for modern machine learning methods.
\end{abstract}
\section{Introduction}

In this paper we consider the \textit{reservoir computing (RC)} problem, a machine learning technique for data-driven modeling of time series that has been particularly successful at producing accurate long forecast of complex dynamics with minimal training \cite{jaeger2001echo,maass2002real,pathak2018model}. Primarily studied in physics and computer science literature, RC has promising applications in neural network training as well as physical computing and neuroscience, prompting many interdisciplinary investigations \cite{yan2024emerging,tanaka2019recent,enel2016reservoir}.
In RC, an observed time series drives a fixed high-dimensional dynamical system, called a \textit{reservoir}. 
Based on the reservoir's response to its driving signal, an output map is trained to reconstruct the signal and then fed back to the reservoir to generate forecasts autonomously beyond the observation horizon.

Despite substantial recent progress in establishing theoretical foundations for RC, including derived conditions for the echo state property, fading memory, universal approximation, and generalized synchronization \cite{jaeger2001echo,manjunath2013echo,grigoryeva2018echo,gonon2021fading,hart2020embedding,grigoryeva2023learning,hart2025generic}, the mechanisms underlying the success of reservoirs as \textit{dynamical models} are less well understood. In particular, existing theory provides limited insight into how reservoir synchronization with driving input and readout training combine to produce an autonomous system that reproduces and forecasts observed dynamics. Developing principled theoretical understanding and moving towards a constructive theory of RC where one can ask how to choose a reservoir that is well-suited to a particular dynamic task is considered a major open problem in machine learning for time series \cite{yan2024emerging}.

A closely related problem has long been studied in systems and control community through the theory of \textit{observers}, dynamical systems driven by measured outputs and designed to reconstruct unmeasured states of their driving system \cite{bernard2022}.  In particular, one popular observer design approach is the Kazantzis-Kravaris-Luenberger (KKL) observer which by construction embeds the plant dynamics into a stable latent system and then reconstructs the original system's state through an inverse map \cite{luenberger1966observers,kazantzis1998,andrieu2006}. KKL observer design bears a striking structural resemblance to the RC setup; in both problems, a measured signal drives a dynamical system (an observer or a reservoir) whose state converges to a representation of the underlying dynamics. Furthermore, both problems involve the construction of a readout map. In the classical KKL approach, the readout construction begins with a known plant model and the map is obtained by solution to a partial differential equation, from which the readout is then obtained as a left inverse. In contrast, RC constructs the readout using a model-free approach, by fitting the map to input-output data collected from the driven reservoir.

Our central insight in the presented work is that RC and KKL observer design are two sides of the same problem. Working in continuous time, we establish that contracting reservoirs are KKL output predictors whose readout maps are learned from data rather than constructed with the knowledge of the plant model. This equivalence explains reservoir forecasting through a connection to the observer internal model principle: by learning to reconstruct the observed signal, a reservoir acquires an internal model of the dynamics that generate it, thereby enabling prediction beyond the observation window. We then quantify how imperfect initialization and readout training affect forecast accuracy, and offer a partial explanation for why even simple linear readouts can perform well in RC applications. Together, these insights connect the empirical success of RC to the mature mathematical theory of observers and open a path towards a future constructive approach to reservoir computing leveraging classical insights from systems and control theory.

\subsection{Notation} We denote with $\mathbb{R}$ the real numbers. Define a square matrix $A \in \mathbb{R}^{n\times n}$. We denote by $\sigma(A)$ the set of all eigenvalues $\lambda_i$ of $A$. We say $A$ is Hurwitz if for any eigenvalue $\lambda_i \in \sigma(A)$, $\operatorname{Re}[\lambda_i] < 0$.  For a matrix $M \in \mathbb{R}^{n \times m}$ we denote by $M^{\dagger}$ the Moore-Penrose pseudoinverse. The singular values of $M\in\mathbb{R}^{a\times b}$ are the
$\min\{a,b\}$ nonnegative diagonal entries of $\Sigma$ in a
singular value decomposition $M=U\Sigma V^\top$.
We denote the smallest singular value by $\sigma_{\min}(M)$.We use $\|\cdot\|$ for the Euclidean norm of vectors and its induced matrix norm. For symmetric matrices $A$ and $B$ of the same size, $A \preceq B$ means that $B-A$ is positive semidefinite, while $A \prec B$ means that $B-A$ is positive definite. For $P=P^\top \succ 0$, the corresponding weighted norm is $\|v\|_P := \sqrt{v^\top P v}$.

For sets $\mathcal{X},\mathcal{Y}$, we use $C^1(\mathcal{X},\mathcal{Y})$ to denote the class of continuously differentiable functions from $\mathcal{X}$ to $\mathcal{Y}$.The notation $\emptyset$ denotes the empty set.
We denote by $\mathcal B(\mathbb R^d)$ the collection of Borel subsets of $\mathbb R^d$ defined as the smallest collection of open sets and closed under complements and countable unions. A measure on $\mathbb R^d$ is a function $\mu:\mathcal B(\mathbb R^d)\to[0,\infty)$ satisfying $\mu(\emptyset)=0$ and
$
\mu\left(\bigcup_{j=1}^{\infty}S_j\right)
=
\sum_{j=1}^{\infty}\mu(S_j)
$
for every sequence of pairwise disjoint Borel sets $S_j \in \mathcal B(\mathbb R^d)$. We call $\mu$ a probability distribution if $\mu(\mathbb R^d)=1$. For $z\in\mathbb R^d$ and $\varepsilon>0$, we denote by $B_\varepsilon(z)=\{w\in\mathbb R^d:\|w-z\|<\varepsilon\}$ the open Euclidean ball centered at \(z\). The support of $\mu$ is defined as 
\[
\operatorname{supp}\mu
=
\{z\in\mathbb R^d:\mu(B_\varepsilon(z))>0
\text{ for every }\varepsilon>0\}.
\]
Thus, a point belongs to the support if every open ball centered at that point has positive measure.

For a time-varying square integrable signal $v(t)$ over a time window $t \in [0,\tau]$ define  $\|v\|_\tau  := \left(\frac{1}{\tau}\int_0^\tau |v(t)\|^2\,dt\right)^{1/2}$.
A map $F$ is $L$-Lipschitz on a set $\mathcal{D}$ if
$\|F(v)-F(w)\| \leq L\|v-w\|$ for all $v,w \in \mathcal{D}$.
It is locally Lipschitz if this property holds in a
neighborhood of each point, with a possibly
neighborhood-dependent constant. For the autonomous system $\dot{x}=f(x)$, we denote by
$\phi_t(x_0)$ its unique solution at time $t$ initialized at
$x(0)=x_0$, whenever this solution exists.

\section{Observers and Reservoirs \label{sec:overview}}

First we will overview KKL observer design and reservoir computing, emphasizing their parallel constructions. For both problems, consider the autonomous system 
\begin{equation}
    \dot{x} = f(x), \quad y = h(x), \quad x(0) = x_0, \label{eq:sys}
\end{equation}
where $x \in \mathbb{R}^n$ is the state and $y \in \mathbb{R}^p$ is the output, with sufficiently smooth $f,h$. We will assume that the system \eqref{eq:sys} is well-behaved, i.e. that there exist sets $\mathcal{X}_0\subseteq \mathcal{X} \subseteq \mathbb{R}^n$ such that any solution $x(t)$ with $x(0) \in \mathcal{X}_0$ remains in $\mathcal{X}$ for all $t > 0$, where the set $\mathcal{X}$ is compact and hence the trajectories $x(t)$ are bounded. 

\subsection{Kazantzis-Kravaris-Luenberger (KKL) Observers}

An \textit{asymptotic observer} for \eqref{eq:sys} is a system
\begin{equation}
    \dot{z} = \phi(z,y), \quad \hat{x} = \psi(z), \quad z(0) = z_0 \label{eq:observer}
\end{equation}
where $z \in \mathbb{R}^m$ is the observer state, $\hat{x} \in \mathbb{R}^n$, with sufficiently smooth $\phi, \psi$, for which $\lim_{t\to\infty} \| x(t) - \hat{x}(t) \| = 0$ given any solution $x(t)$ of \eqref{eq:sys} in $\mathcal{X}$ \cite{bernard2022}. The challenge of observer design is obtaining the maps $\phi,\psi$ for which the error convergence between the reference system and the observer output is guaranteed. 

A popular observer construction, often referred to as the KKL observer, was first developed by Luenberger for linear systems \cite{luenberger1966observers} and then extended locally to nonlinear systems by Kazantzis and Kravaris \cite{kazantzis1998}, followed by global nonlinear extensions in \cite{kreisselmeier2003nonlinear,andrieu2006}. The KKL approach assumes $\phi$ in \eqref{eq:observer} to be a linear map and reframes the observer design problem as that of finding a positive integer $m$ and a nonlinear mapping $T: \mathbb{R}^n \to \mathbb{R}^m$ for which, if $x(t), y(t)$ are a solution of \eqref{eq:sys}, then $z = T(x)$ satisfies 
\begin{equation}
    \dot{z} = A z + B y, \quad \hat{x} = T^*(z) \label{eq:linear-observer}
\end{equation}
with $B \in \mathbb{R}^{m \times p}$, $T^*$ a sufficiently smooth left inverse of $T$, and $A \in \mathbb{R}^{m\times m}$ a Hurwitz matrix satisfying
\begin{equation}
    A^{\top} P + P A \preceq - \alpha I \label{eq:Hurwitz_cond}
\end{equation}
for some symmetric positive definite $P$ and positive real $\alpha$. A key insight of KKL theory is that such a map $T$ can be constructed as a solution to a partial differential equation. Suppose we can find a function $T \in C^1(\mathbb{R}^n,\mathbb{R}^m)$ such that
\begin{equation}
 \frac{ \partial T(x)}{ \partial x} f(x) = AT(x) + B h(x) . \label{eq:KKL_PDE}
\end{equation}
The error $\varepsilon =  z - T(x)$ evolves according to
\begin{align*}
 \dot{\varepsilon} &= Az + B h(x) - \frac{\partial T(x)}{\partial x} f(x) \\
         &= Az + B h(x) - \left( AT(x) + B h(x) \right) = A \varepsilon .
\end{align*}
Since $A$ is Hurwitz, the error converges to zero exponentially. If, further, $T$ is injective, then $T^\ast$ can be chosen as a left inverse of $T$, that is, $T^\ast(T(x)) = x$, and the convergence of the error to zero implies the convergence of $\hat{x}$ to $x$. We will refer to a KKL observer constructed for a linear reference plant as a Luenberger observer, as this was the original construction in \cite{luenberger1966observers}. Constructing a solution to the PDE \eqref{eq:KKL_PDE} can be challenging, and is an active area of investigation. Some numerical strategies can be found in \cite{ramos2020numerical}.

More recently, motivated in part by connections between observer design and recurrent neural networks in machine learning, 
\cite{pachy2024existence,lucia2026learning} introduced a KKL observer construction with nonlinear contracting dynamics. In this generalization, the observer dynamics \eqref{eq:observer} satisfy for all $(z,y)$ the condition 
\begin{equation}
    P \frac{\partial \phi}{\partial z}(z,y) + \frac{\partial \phi}{\partial z}(z,y)^{\top} P \preceq - \alpha I \label{eq:contract_cond}
\end{equation}
for some symmetric positive definite $P$ and positive real $\alpha$. Then a solution $T(x)$ to the PDE
\begin{equation}
\frac{\partial T}{\partial x}(x) f(x) = \phi(T(x), h(x))\label{eq:KKL_PDE_contracting}
\end{equation}
which admits a sufficiently smooth left inverse $T^*(z)$ defines the observer as $\hat{x} = T^*(z)$, since the error $\varepsilon = z - T(x)$ evolves according to 
\begin{align*}
    \dot{\varepsilon} &=  \phi(z,h(x)) - \frac{\partial T(x)}{\partial x} f(x) \\
         &= \phi(z,h(x)) - \phi(T(x),h(x)).
\end{align*}
Letting $P$ be as in \eqref{eq:contract_cond} we can choose a Lyapunov function $V(\varepsilon) = \varepsilon^T P \varepsilon$, the derivative of which along the error dynamics can be shown to satisfy $\dot{V} \leq - \alpha \| \varepsilon \|^2 \leq - \frac{\alpha}{\lambda_{max}(P)}V$ which guarantees asymptotic (in fact, exponential) convergence of the state estimate $\hat{x}$ to the true state $x$ as $t \to \infty$.

\subsection{Reservoir Computing}
A \emph{reservoir} \cite{jaeger2001echo,maass2002real} is an input-driven dynamical system\footnote{Although reservoir computing is commonly formulated in discrete time due to the popularity of the approach for recurrent neural network training, here we adopt a continuous-time formulation as in \cite{hart2024generalised} to establish a direct parallel with continuous time KKL observer constructions. Analogous insights can be developed in discrete time.}
\begin{equation} 
\dot{r} = g(r, u), \quad r(0) = r_0 \label{eq:reservoir}
\end{equation} 
where $r \in \mathbb{R}^{n_r}$ is the reservoir state, $u \in \mathbb{R}^{n_u}$ is the reservoir input, and $g$ is a sufficiently smooth and expressive map. Typically $n_r > n_u$, and we assume $g(r,u)$ to be Lipschitz in $u$. In typical constructions such as the Echo State Network of \cite{jaeger2001echo}, $g$ is realized by a recurrent neural network with fixed, randomly initialized weights. Reservoir computing (RC) is a machine learning technique in which we receive observations of a dynamic input $u(t)$ and the corresponding forced reservoir response history $r(t)$ from an initial condition $r(0) = r_0$ over a time interval $t \in [0,\tau]$. The goal of RC is to construct a data-driven model of the process generating the input time history $u(t)$ from this input-output relationship, typically for the purposes of  forecasting or control. This is achieved by finding an output feedback law $\omega_{\rm out}: \mathbb{R}^{n_r} \to \mathbb{R}^{n_u}$ that augments the reservoir dynamics as 
\begin{equation}
\dot{r} = g(r,\hat{u}), \quad \hat{u} = \omega_{\rm out}(r) \label{eq:reservoir_autonomous}
\end{equation}
where $\hat{u} \in \mathbb{R}^{n_u}$ acts as a model for the time series $u(t)$ beyond the training horizon $\tau$.

A key challenge in RC is identifying properties of the reservoir and the input time series that enable the learning of the output map $ \omega_{\rm out}$ in  \eqref{eq:reservoir_autonomous}. For temporal forecasting, a useful framework for analyzing this problem is \textit{generalized synchronization (GS)} between the reservoir and the source dynamics \eqref{eq:sys} \cite{kocarev1996generalized}. We say the driven reservoir exhibits GS for input $u(t) = y(t) = h(x(t))$ with $p = n_u$ if there exists a continuous map $H:\mathcal{X} \to \mathbb{R}^{n_r}$ such that $r = H(x)$ is invariant along the trajectories and $\lim_{t \to \infty} \|r(t) - H(x(t))\| = 0$ for initial conditions $r_0$ in some neighborhood. The GS property for a reservoir therefore characterizes its ``forgetting'' of initial conditions $r_0$ within a synchronization basin of attraction. It has been shown in \cite{hart2024generalised} that the GS map is defined through the PDE condition
\begin{equation}
    \frac{\partial H}{\partial x}(x) f(x) = g(H(x),h(x)). \label{eq:GS_PDE}
\end{equation}
The \textit{echo state property (ESP)}  \cite{yildiz2012re,manjunath2013echo} imposes a corresponding uniqueness requirement on the reservoir response. A reservoir has ESP if for each admissible bounded input history $u(t)$ on $(-\infty,t]$ there must be a unique bounded reservoir response $r(t)$ on that interval. In other words, the current reservoir state must be uniquely determined by its input history. A sufficient condition for a reservoir to have both GS and ESP is uniform contraction of reservoir dynamics \cite{hart2024generalised}, i.e. the reservoir equations satisfy 
\begin{equation}
    P \frac{\partial g}{\partial r}(r,u) + \frac{\partial g}{\partial r}(r,u)^{\top} P \preceq - \alpha I  \label{eq:contraction_reservoir}
\end{equation}
for some symmetric positive definite $P$ and positive real $\alpha$ for all $r \in \mathbb{R}^{n_r}$, $u \in \mathbb{R}^{n_u}$. This contraction condition \eqref{eq:contraction_reservoir} is often used a design criterion for reservoir initialization.

Consider forecasting the output of the autonomous system \eqref{eq:sys}. In the RC framework, for $t \in [0,\tau]$, observations of the original dynamics force the reservoir as inputs, with $u(t) = y(t)$ in \eqref{eq:reservoir} and the reservoir response $r(t)$ is collected. Then for $t > \tau$, the trained autonomous  reservoir \eqref{eq:reservoir_autonomous} is simulated forward and its outputs $\hat{u}(t)$ serve as predictions of observations $y(t)$ of \eqref{eq:sys}. On the GS graph, an exact ouptut readout satisfies $\omega_{\rm out}(H(x)) = h(x)$. If $H$ is injective, this read out is $h \circ H^{-1}$ on $H(\mathcal{X})$. 
The problem of finding the reservoir feedback map $\omega_{\rm out}$ can be stated as a minimization over an appropriate function space 
\begin{equation}
\omega_{\rm out} =  \operatorname*{argmin}_{\omega} \int_0^{\tau} \| u(t) -\omega( r(t)) \|^2 \rd t\label{eq:omega_out} 
\end{equation}
as in \cite{ohkubo2024reservoir}. In RC literature, it is further predominantly assumed that the output map $\omega_{\rm out}$ can be sufficiently well approximated as a linear map $\omega_{\rm out}(r) = W_{\rm out} r$ with $W_{\rm out} \in \mathbb{R}^{n_u \times n_r}$. Then the feedback design problem simplifies to a least squares problem 
\begin{equation}
W_{\out} = \operatorname*{argmin}_{W \in \mathbb{R}^{n_u \times n_r}} \int_0^{\tau} \| u(t) - W r(t) \|^2 \rd t\label{eq:lsq_prob} 
\end{equation} 
whose minimum-Frobenius-norm solution is
\begin{equation} 
 W_{\out} = \left( \int_0^{\tau} u(t) r(t)^T \rd t \right)\left( \int_0^{\tau} r(t) r(t)^T \rd t \right)^{\dagger}.\label{eq:Wout} 
\end{equation}
A reservoir trained with a least squares approximation then produces forecasts according to 
\begin{equation}
\dot{r} = g(r,\hat{u}), \quad \hat{u} = W_{\rm out} r. \label{eq:reservoir_autonomous_linearmap}
\end{equation}
\subsection{A brief history of convergence between KKL and RC}
An interesting convergent evolution has been happening between the RC and KKL observer literatures, with the two fields rarely cross-referencing one another. 
From the inception of RC in \cite{jaeger2001echo,maass2002real}, reservoir dynamics $g$ have been assumed nonlinear, even when the output map is restricted to be linear. This is in contrast with classical KKL observers \cite{kazantzis1998,andrieu2006}, where observer dynamics are linear while the output map is nonlinear. A nonlinear observer dynamics construction was only recently considered in the KKL context with the contracting observers of \cite{pachy2024existence,lucia2026learning}, while a nonlinear output map construction was only recently considered in the RC context, e.g. in \cite{ohkubo2024reservoir,grigoryeva2023learning,grigoryeva2024data}. A number of works have also recently appeared on model-free KKL observer constructions, e.g. \cite{ramos2020numerical,miao2023learning,tang2023data,tang2024synthesis,peralez2024deep}. Meanwhile in RC literature, classically the problem is posed with full state observation $y = x$ in \eqref{eq:sys}, but was recently generalized to ``reservoir observers'' for reconstruction of unseen variables, further blurring the lines between the two fields \cite{lu2017reservoir,liu2026reservoir}.  Many mathematical results were also discovered in parallel, for example an embedding principle for discrete-time KKL observers was obtained in \cite{brivadis2019luenberger} and an analogous result was stated for reservoirs four years later in \cite{grigoryeva2023learning}.

In a most striking parallel, recently the exact reservoir computing problem was independently derived in the KKL literature under a different name.
In \cite{janny2021deep} the authors design an \textit{output predictor} using the input/output response data of a uniformly contracting dynamical system driven by the output of an autonomous plant \eqref{eq:sys}. In the proposed framework, a feedback law is learned which, in closed loop with the original contracting system, synthesizes a generative model that forecasts the output $y$ of \eqref{eq:sys} for times beyond the training horizon. This framework rediscovers from a starting point grounded in observer design theory the \textit{exact} setup exploited by reservoir computing since its inception in \cite{jaeger2001echo,maass2002real}. The authors in \cite{janny2021deep} further suggest using the linear KKL framework with a nonlinear readout as particular model choice for the output predictor design \cite{jaeger2001echo}. However, \cite{janny2021deep} stops at one-dimensional output prediction and does not formulate the multidimensional case. In the remainder of this paper we will make explicit the connection between the RC and KKL frameworks, reinterpreting RC as a KKL output predictor as coined in \cite{janny2021deep}. Our goal here is to join together insights from these two previously dijoint bodies of scholarship and build upon them.


\section{Reservoirs are KKL Output Predictors \label{sec:equivalence}}

From the setup in Section \ref{sec:overview}, we immediately notice several strong mathematical parallels between KKL observers and reservoirs. The embedding map $T$ in KKL observer design is precisely the generalized synchronization map $H$ in reservoirs. The training step in RC is a data-driven construction of an observer over the window $[0,\tau]$. Contraction conditions \eqref{eq:linear-observer},\eqref{eq:contract_cond},\eqref{eq:contraction_reservoir} ensure, for both the observer and the reservoir, that the embedding/generalized synchronization map exists and is unique on a relevant neighborhood of interest. In this section we make these parallels concrete and illustrate that indeed, KKL obsevers and reservoirs are fundamentally interconnected constructions: in the infinite data limit on the synchronized manifold, the output map $\omega_{\rm out}$ learned a by a reservoir can be constructed from the solution to the KKL PDE \eqref{eq:KKL_PDE},\eqref{eq:KKL_PDE_contracting} and, conversely, a solution to the KKL PDE induces a closed-loop feedback law generates forward predictions of the system output $y$ in \eqref{eq:sys}. 

 To establish our main result in this section we will first introduce a few additional notations and assumptions. To make more precise the reservoir training loss in \eqref{eq:omega_out}, let $\mu$ be a probability distribution on $( \mathbb{R}^{n_r} \times \mathbb{R}^{n_u}$ describing the ampling of the measured reservoir state and input pairs $(r,u)$ and define the function class $\mathcal{W}$ to consist of locally Lipschitz functions $\mathbb{R}^{n_r} \to \mathbb{R}^{n_u}$. The loss function that defines the general reservoir training objective is 
\begin{equation}
\mathcal{L}_{\mu}(\omega) = \int \|u - \omega(r) \|^2 \rd\mu, \quad \omega_{\rm out}  = \operatorname*{argmin}_{\omega \in \mathcal{W}} \mathcal{L}_{\mu}(\omega).  \label{eq:reservoir_loss}
\end{equation}

The following assumptions clarifies the conditions under which this learning objective is well-posed.
\begin{assumption}[Well-Posedness] \label{ass:learning_well_posedness}
Given generalized synchronization map $H(x)$ defined through a solution to \eqref{eq:GS_PDE} and the plant output map $h(x)$ from \eqref{eq:sys}, the following hold. 

A. For the probability measure $\mu$ in \eqref{eq:reservoir_loss}, we have $\operatorname{supp}(\mu) = \{ (H(x),h(x)): x \in \mathcal{X} \}$;

B.  There exists $\omega^* \in \mathcal{W}$ satisfying $\omega^*(H(x)) = h(x)$ for all $x \in \mathcal{X}$.

\end{assumption}
\noindent Assumption \ref{ass:learning_well_posedness} is purposefully general, leaving our setup agnostic to the specific properties of the plant dynamics \eqref{eq:sys}.  The first condition in the assumption is a data informativity requirement, as it requires every neighborhood of an input/output pair $(r,u)$ of the driven reservoir on the generalized synchronization manifold associated with the compact set $\mathcal{X}$ of \eqref{eq:sys} to receive positive training weight. This means, in turn, that zero training loss directly implies exact output reconstruction throughout the entire synchronization set. The second part of the assumption ensures realizability within the specified function class $\mathcal{W}$. This will hold, for eample, if the generalized synchronization map $H$ is injective with a Lipschitz inverse on $H(\mathcal{X})$ and if $h$ is Lipschitz on $\mathcal{X}$.  
With these two general conditions, $h\circ H^{-1}$ admits a Lipschitz extension to the reservoir state space $\mathbb{R}^{n_r}$. In the linear case $H(x)=\Pi x$ and $h(x)=Gx$, full column rank of $\Pi$ gives the explicit readout $\omega^*(r)=G\Pi^\dagger r$. Interestingly, injectivity of $H$ is sufficient but not necessary for the assumption, and therefore the output reconstruction, to hold as was also observed in \cite{janny2021deep}. For example if $H(x) = (h(x), \ell(x))$  then a properly constructed coordinate projection would be sufficient for ouput reconstruction.

Finally, before stating our main result we introduce a convenient notion of distance between two readouts $\omega_1,\omega_2 \in \mathcal{W}$ on the reservoir's synchronization manifold, given a trajectory $x(t)$ of the system \eqref{eq:sys} over time window $t \in [0,\tau]$:
\begin{equation}
d_{\tau} (\omega_1, \omega_2) = \left( \frac{1}{\tau} \int_0^{\tau} \| \omega_1(H(x(t))) - \omega_2(H(x(t))) \|^2 \rd t \right)^{1/2} \label{eq:output_dist}
\end{equation}
We are now ready to establish our general equivalence result.

\begin{proposition}[Contracting reservoir is a contracting KKL output predictor] \label{thm:equivalence} Consider the interconnection of \eqref{eq:sys} and \eqref{eq:reservoir} with $u = y$, with  \eqref{eq:sys} on a compact forward invariant set $\mathcal{X}$,  with the reservoir equations \eqref{eq:reservoir} satisfying the uniform contraction condition \eqref{eq:contraction_reservoir}, and under Assumption \ref{ass:learning_well_posedness}. We will assume all functions to be Lipschitz when appropriate.

\noindent \textbf{A.}  For the driven reservoir from initial conditions $x(0),r(0)$,
\begin{equation*}
\|r(t) - H(x(t))\|_P \leq e^{- \lambda t} \| r(0) - H(x(0))\|_P,
\end{equation*}
 where $\lambda = \alpha/(2 \lambda_{max}(P))$ with $\alpha,P$ defined through the contraction condition \eqref{eq:contraction_reservoir};

\noindent\textbf{B.} Given a synchronized initialization $r(0) = H(x(0))$ and for any $\omega \in \mathcal{W}$, the following are equivalent: 
\begin{enumerate}
\item $\omega$ is a global minimizer of \eqref{eq:reservoir_loss}; 
\item $\omega(H(x))  = h(x)$ for every $x \in \mathcal{X}$; 
\item For every $x_0 \in \mathcal{X}$, the closed-loop reservoir
\begin{equation*}
\dot{r} = g(r,\omega(r)), \quad \hat{u} = \omega(r), \quad r(0) = H(x_0)
\end{equation*}
satisfies $r(t) = H(\phi_t(x_0))$ and $\hat{u}(t) = h(\phi_t(x_0))$ for all $t \geq 0$ where $\phi_t$ is the flow of \eqref{eq:sys} on $ \mathcal{X}$ from initial condition $x_0$.
\end{enumerate}
\textbf{C.} Consider a finite time horizon $0 < \tau < \infty$ and let $\mathcal{W}_{L} \subset \mathcal{W}$ be a class of locally Lipschitz functions $\omega: \mathbb{R}^{n_r} \to \mathbb{R}^{n_u}$ with a common Lipschitz constant $L$ on a neighborhood of $H(\mathcal{X})$. Let $\omega^*$ defined in Assumption 1.B be in $\mathcal{W}_L$ and consider training data $r(t),u(t) = h(x(t))$ generated over the time window $t \in [0,\tau]$.\ from initial condition satisfying $\| r(0) - H(x(0))\| < \varepsilon$. Then the minimizer 
\begin{equation}
\hat{\omega}_{\varepsilon,\tau} \in \operatorname*{argmin}_{\omega \in \mathcal{W}_L} \frac{1}{\tau} \int_0^{\tau} \| h(x(t)) - \omega(r(t)) \|^2 \rd t \label{eq:approximate_output}
\end{equation}
when it exists satisfies 
\begin{equation}
d_{\tau}(\hat{\omega}_{\varepsilon,\tau},\omega^*) \leq 2 L \kappa  
\varepsilon \sqrt{\frac{1 - e^{-2 \lambda \tau}}{2 \lambda \tau}}
\end{equation}
where $\lambda = \alpha/(2 \lambda_{max}(P))$ is the reservoir contraction rate and $\kappa = \sqrt{\lambda_{max}(P)/\lambda_{min}(P)}$. Hence,given any $\delta > 0$ the learned readout satisfies $d_{\tau}(\hat{\omega}_{\varepsilon,  \tau}) \leq \delta$ whenever $\varepsilon \leq (\delta/2 L \kappa) \sqrt{2 \lambda \tau/(1 - e^{-2 \lambda \tau})}$.
\end{proposition}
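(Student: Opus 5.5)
For Part A, I would run the same Lyapunov argument used above for the contracting KKL observer, now with $\phi$ replaced by $g$. Set $e = r - H(x)$. By the GS PDE \eqref{eq:GS_PDE},
\[
\dot e = g(r,h(x)) - g(H(x),h(x)).
\]
Apply the integral form of the mean value theorem along the segment $s\mapsto H(x)+s e$:
\[
\dot e = \Big(\int_0^1 \tfrac{\partial g}{\partial r}(H(x)+se,h(x))\,\rd s\Big)e .
\]
With $V=e^\top P e$, condition \eqref{eq:contraction_reservoir} gives $\dot V\le -\alpha\|e\|^2\le -(\alpha/\lambda_{\max}(P))V$. Grönwall then yields $V(t)\le e^{-2\lambda t}V(0)$ with $\lambda=\alpha/(2\lambda_{\max}(P))$. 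Taking square roots gives the claimed bound in $\|\cdot\|_P$.

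For Part B, I would prove the cycle (1)$\Leftrightarrow$(2) and (2)$\Leftrightarrow$(3).

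For (2)$\Rightarrow$(1): note $\mathcal L_\mu\ge 0$. Under (2), $u=\omega(r)$ on the support in Assumption A, so $\mathcal L_\mu(\omega)=0$ and $\omega$ is a global minimizer.

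For (1)$\Rightarrow$(2): Assumption B gives $\mathcal L_\mu(\omega^*)=0$, so the minimum value is zero. Any minimizer therefore has $\|u-\omega(r)\|^2=0$ $\mu$-a.e. Because $\omega$ is continuous, the function $\|u-\omega(r)\|^2$ is continuous. If it were positive at a support point, it would be positive on a ball around that point, and the ball has positive measure by the definition of support, a contradiction. Hence it vanishes on all of $\operatorname{supp}\mu$, which by Assumption A is exactly (2).

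For (2)$\Rightarrow$(3): define $\tilde r(t)=H(\phi_t(x_0))$. By \eqref{eq:GS_PDE} and (2),
\[
\dot{\tilde r}=g(\tilde r,h(\phi_t x_0))=g(\tilde r,\omega(\tilde r)) .
\]
So $\tilde r$ solves the closed loop with the correct initial condition, using forward invariance of $\mathcal X$. Uniqueness follows because $g$ is Lipschitz in $u$ and $\omega$ is locally Lipschitz, so the closed-loop vector field is locally Lipschitz.

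For (3)$\Rightarrow$(2): evaluate $\hat u(0)=\omega(H(x_0))=h(x_0)$ at $t=0$, and let $x_0$ range over $\mathcal X$.

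For Part C, I would use an optimality comparison followed by a triangle inequality. Write $\hat\omega=\hat\omega_{\varepsilon,\tau}$ and $J(\omega)=\|h(x)-\omega(r)\|_\tau$, where $\|\cdot\|_\tau$ is the windowed signal norm.

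First, since $\omega^*\in\mathcal W_L$, we have $J(\hat\omega)\le J(\omega^*)$. Using $h(x)=\omega^*(H(x))$ and the Lipschitz property,
\[
J(\omega^*)=\|\omega^*(H(x))-\omega^*(r)\|_\tau\le L\|r-H(x)\|_\tau .
\]

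Second, by the triangle inequality,
\[
d_\tau(\hat\omega,\omega^*)=\|\hat\omega(H(x))-h(x)\|_\tau\le \|\hat\omega(H(x))-\hat\omega(r)\|_\tau + J(\hat\omega)\le 2L\|r-H(x)\|_\tau .
\]

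Third, Part A gives $\|r-H(x)\|\le\kappa e^{-\lambda t}\varepsilon$ after converting between $\|\cdot\|_P$ and the Euclidean norm via $\lambda_{\min}(P)$ and $\lambda_{\max}(P)$. Then
\[
\|r-H(x)\|_\tau^2\le \kappa^2\varepsilon^2\frac{1}{\tau}\int_0^\tau e^{-2\lambda t}\rd t=\kappa^2\varepsilon^2\frac{1-e^{-2\lambda t}|_{t=\tau}}{2\lambda\tau},
\]
which gives the stated bound. The $\delta$ statement follows by solving that bound for $\varepsilon$.

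The main obstacle is the Lipschitz step for $\hat\omega$ and $\omega^*$. The common constant $L$ is only guaranteed on a neighborhood of $H(\mathcal X)$, so I must check that $r(t)$ stays in that neighborhood. I would handle this by taking $\varepsilon$ small enough: Part A shows the trajectory $r(t)$ stays within $\kappa\varepsilon$ of $H(x(t))$, and for the segment between $r(t)$ and $H(x(t))$ to remain in the neighborhood I would assume that neighborhood contains the $\kappa\varepsilon$-tube around $H(\mathcal X)$. A secondary subtlety is the equality $h(x)=\omega^*(H(x))$ along the trajectory, which needs $x(t)\in\mathcal X$; this holds by forward invariance.
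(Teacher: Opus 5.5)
Your proposal is correct and follows the paper's proof essentially step for step. Part B uses the same support/continuity contradiction and the same $t=0$ evaluation. Part C uses the same comparison of $\hat\omega_{\varepsilon,\tau}$ against $\omega^*$ and the same triangle inequality giving $2L\|r-H(x)\|_\tau$, then closes with Part A plus norm conversion, handling the Lipschitz neighborhood by the same ``$\varepsilon$ small enough'' device. The only difference is in Part A: you write out the mean-value/Lyapunov argument with $V=e^\top P e$ (the one the paper sketches in Section II for contracting KKL observers), whereas the paper simply cites the result from the reservoir and KKL literature.
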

\begin{proof}
A. This is a restatement of a result independently established  in the reservoir computing literature \cite{hart2024generalised} and in KKL observer literature \cite{andrieu2006}.
B. By Assumption 1, $\mathcal L_\mu(\omega^*)=0$, so the minimum loss is zero. Let $\omega$ be a global minimizer and suppose, for contradiction,  there exists $x^*\in\mathcal X$ such that
\begin{equation}
\varepsilon:=\|h(x^*)-\omega(H(x^*))\|>0.
\end{equation}
By continuity, there exists $\delta>0$ such that $\|u-\omega(r)\|\ge\varepsilon/2$ for every $(r,u)\in B_\delta(r^*)$, where $r^*=(H(x^*),h(x^*))$. Assumption 1.A gives $r^*\in\operatorname{supp}\mu$, hence $\mu(B_\delta(r^*))>0$. Therefore,
\begin{equation}
\mathcal L_\mu(\omega)
\ge
\int_{B_\delta(r^*)}\|u-\omega(r)\|^2\,d\mu
\ge
\frac{\varepsilon^2}{4}\mu(B_\delta(r^*))
>0,
\end{equation}
contradicting minimality. This proves $1)\Rightarrow2)$.

Conversely, suppose $\omega(H(x))=h(x)$ for every $x\in\mathcal X$. By Assumption 1.A, every pair $(r,u)\in\operatorname{supp}\mu$ has the form $(H(x),h(x))$, so $u-\omega(r)=0$ on the support. Since $\mu$ assigns zero measure outside its support, $\mathcal L_\mu(\omega)=0$. Nonnegativity of the loss then implies that $\omega$ is a global minimizer, proving $2)\Rightarrow1)$.

Suppose $2)$ holds. For any $x_0\in\mathcal X$, define $\bar r(t)=H(\phi_t(x_0))$. The generalized synchronization PDE \eqref{eq:GS_PDE} gives
\begin{equation}
\dot{\bar r}(t)
=g(\bar r(t),h(\phi_t(x_0)))
=g(\bar r(t),\omega(\bar r(t))).
\end{equation}
Thus $\bar r(t)$ solves the closed-loop reservoir equations with the prescribed initial condition. Local Lipschitz continuity of the closed-loop vector field ensures uniqueness, yielding $r(t)=\bar r(t)$ and $\hat u(t)=h(\phi_t(x_0))$ for all $t\ge0$. This proves $2)\Rightarrow3)$. 

Finally, evaluating the output identity in $3)$ at $t=0$ gives
\begin{equation}
\omega(H(x_0)) =\omega(r(0)) =\hat u(0) =h(\phi_0(x_0)) =h(x_0).
\end{equation}
Since $x_0\in\mathcal X$ is arbitrary, this proves $3)\Rightarrow2)$.

C. Let $\bar r(t)=H(x(t))$ as before. By part A and equivalence of the Euclidean and $P$-weighted
norms,
\begin{equation*}
    \|r(t)-\bar r(t)\|
    \leq \kappa e^{-\lambda t}\|r(0)-\bar r(0)\|
    \leq \kappa\varepsilon e^{-\lambda t}.
\end{equation*}
Consequently,
\begin{equation*}
    \|r-\bar r\|_\tau^2 \leq \frac{\kappa^2 \varepsilon^2}{\tau} \int_0^\tau e^{-2 \lambda t} \rd t = \kappa^2\varepsilon^2
   \frac{1-e^{-2\lambda\tau}}{2\lambda\tau}.
\end{equation*}
For sufficiently small $\varepsilon$, both $r(t)$ and $\bar r(t)$ lie in the neighborhood on which the readouts have common Lipschitz constant $L$. Since $ \omega^*\in\mathcal W_L$ and $\omega^*(\bar r(t))=h(x(t))=u(t)$, optimality of
$\hat\omega_{\varepsilon,\tau}$ implies
\begin{multline*}
    \|u-\hat\omega_{\varepsilon,\tau}(r)\|_\tau
    \leq \|u-\omega^*(r)\|_\tau\ \\
= \|\omega^*(\bar r)-\omega^*(r)\|_\tau
    \leq L\|\bar r-r\|_\tau.
\end{multline*}
Using the triangle inequality and the Lipschitz bound for
$\hat\omega_{\varepsilon,\tau}$, we obtain
\begin{multline*}
    d_\tau(\hat\omega_{\varepsilon,\tau},\omega^*) = \|\hat\omega_{\varepsilon,\tau}(\bar r)-u\|_\tau \\
\leq
    \|\hat\omega_{\varepsilon,\tau}(\bar r)
      -\hat\omega_{\varepsilon,\tau}(r)\|_\tau+
    \|\hat\omega_{\varepsilon,\tau}(r)-u\|_\tau\\
    \leq 2L\|\bar r-r\|_\tau
    \leq 2L\kappa\varepsilon
    \sqrt{\frac{1-e^{-2\lambda\tau}}{2\lambda\tau}}.
\end{multline*}
The stated condition ensuring
$d_\tau(\hat\omega_{\varepsilon,\tau},\omega^*)\leq\delta$
follows by rearranging the terms in the statement above.
\end{proof}

Proposition \ref{thm:equivalence} carries the following intuitive interpretation. First, Proposition \ref{thm:equivalence}. A shows that during training, a reservoir's internal dynamics reach synchronization with its driving input and in fact, if equipped with the learned readout $\omega_{\rm out}$ found in Proposition \ref{thm:equivalence} 
This synchronization phenomenon is well-known in both the KKL and reservoir computing literature, e.g. \cite{jaeger2001echo,hart2023generalised}. 

Proposition  \ref{thm:equivalence}.B establishes that reservoirs trained on outputs from an autonomous system learn an internal model of the output dynamics in order to generate closed-loop forecasts. This observation is related to the famous internal model principle in systems and control \cite{francis1976internal,trumpf2014internal,bin2022internal} which, broadly speaking, states that a dynamical system must contain within it of the dynamics it seeks to asymptotically reproduce (as is the case for observers) or regulate (as is the case for controllers). For KKL observers, the internal model is established by construction; we showed that for reservoirs an internal model emerges from training the readout map on observed output trajectories, and then realized in forward simulation by closed-loop reservoir dynamics. The classical idea of internal model principle therefore provides a principled systems-theoretic explanation for the forecasting capability of reservoirs, a question that is considered a central puzzle in reservoir computing \cite{bollt2021explaining,yan2024emerging}.

Finally, Proposition \ref{thm:equivalence}.C  establishes that reservoirs trained with input/output data that is not perfectly synchronized still learn an output map that is reasonably close to the map obtained under idealized conditions. In the following section we will build on this and establish an upper bound for the error of a forecast of a trajectory of \eqref{eq:sys} obtained via a closed loop reservoir with an imperfect learned output map and with imperfect (i.e. not perfectly synchronized) initialization.

\section{Forecast error upper bound}

In this section we consider forecasts produced by reservoirs whose ouput readout map was fit to (possibly imperfect) data. We will establish an error bound, generalizing results in \cite{janny2021deep} which itself synthesizes the developments in \cite{marconi2007output,isidori2010existence}. In \cite{janny2021deep} under assumption of scalar output, a forecasting error for contracting output predictors is established under perfect knowledge of the GS map, and bounds accounting imperfect output map were established for KKL output predictors with linear dynamics and nonlinear output map. We will generalize the proof of \cite{janny2021deep} to establish an error bound in the most general case relevant to RC, for multidimensional inputs and imperfect learned readout map with potentially nonlinear, contracting reservoir dynamics. Let $\omega_\theta \in \mathcal{W}$ be an approximate readout map obtained by solving the minimization problem \eqref{eq:reservoir_loss} with associated Lipschitz constant $L_{\omega}$, and let $L_u$ be the Lipschitz constant for the open loop reservoir $g(r,u)$ with respect to $u$, and let $L_{g}$ be the Lipschitz constant for the closed loop reservoir map $g(r,\omega_{\rm out}(r))$.

\begin{proposition}[Forecast Error Bound] \label{prop:forecast_bound}Consider a reservoir trained over the time window $t \in [0,\tau]$ to obtain a readout map $\hat{u}_{\theta}(t) = \omega_{\theta}(r(t))$.
Suppose $\|\omega_{\theta}( H(x)) - h(x) \| \leq \varepsilon$ for $x \in \mathcal{X}$ and nonnegative $\varepsilon$. Then
\begin{multline}
\|\hat{u}_{\theta}(\tau + s) - h(x(\tau+s))\|  \leq  \varepsilon \left( 1 + \frac{L_{\omega} L_u}{L_{g}} \left( e^{L_{g} s} - 1 \right) \right) \\
+  \kappa L_{\omega} e^{-\lambda \tau + L_{g} s} \|r(0) - H(x(0))\|  \label{eq:forecast_bound}
\end{multline}
where $\lambda = \alpha/(2 \lambda_{max}(P))$ is the reservoir contraction rate and $\kappa = \sqrt{\lambda_{max}(P)/\lambda_{min}(P)}$.
\end{proposition}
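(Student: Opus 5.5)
The plan is to compare the closed-loop forecast trajectory $r(\tau+s)$, produced by $\dot r = g(r,\omega_\theta(r))$ from the reservoir state $r(\tau)$ reached at the end of training, with the synchronized reference $\bar r(t) = H(x(t))$. I will use Gr\"onwall's inequality on $[\tau,\tau+s]$, and I read $L_g$ as the Lipschitz constant of the closed-loop field $r\mapsto g(r,\omega_\theta(r))$. The output error then splits by the triangle inequality as
\begin{equation*}
\|\hat u_\theta(\tau+s) - h(x(\tau+s))\| \le \|\omega_\theta(r(\tau+s)) - \omega_\theta(\bar r(\tau+s))\| + \|\omega_\theta(H(x(\tau+s))) - h(x(\tau+s))\| .
\end{equation*}
The second term is at most $\varepsilon$ by hypothesis, since $x(\tau+s)\in\mathcal X$ by forward invariance. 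The first term is at most $L_\omega \|e(s)\|$, where $e(s) := r(\tau+s)-\bar r(\tau+s)$. So everything reduces to bounding $\|e(s)\|$.

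\emph{Step 1: initial mismatch at $s=0$.} During training the reservoir is driven by $u=y$. Proposition \ref{thm:equivalence}.A together with the equivalence of the $P$-weighted and Euclidean norms gives
\begin{equation*}
\|e(0)\| \le \kappa e^{-\lambda\tau}\|r(0)-H(x(0))\| .
\end{equation*}

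\emph{Step 2: differential inequality for $e$.} By the GS PDE \eqref{eq:GS_PDE}, $\dot{\bar r} = g(\bar r, h(x))$. Adding and subtracting $g(\bar r,\omega_\theta(\bar r))$ gives
\begin{equation*}
\dot e = \big[g(r,\omega_\theta(r)) - g(\bar r,\omega_\theta(\bar r))\big] + \big[g(\bar r,\omega_\theta(\bar r)) - g(\bar r,h(x))\big].
\end{equation*}
The first bracket is bounded by $L_g\|e\|$. The second is bounded by $L_u\|\omega_\theta(H(x))-h(x)\|\le L_u\varepsilon$. Hence, in the Dini/upper-derivative sense,
\begin{equation*}
\tfrac{d}{ds}\|e\| \le L_g\|e\| + L_u\varepsilon .
\end{equation*}
Gr\"onwall then gives
\begin{equation*}
\|e(s)\| \le e^{L_g s}\|e(0)\| + \frac{L_u\varepsilon}{L_g}\big(e^{L_g s}-1\big).
\end{equation*}
Multiplying by $L_\omega$, adding $\varepsilon$, and inserting Step 1 yields exactly \eqref{eq:forecast_bound}.

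\emph{Main obstacle.} The delicate point is the Lipschitz bookkeeping. $\omega_\theta$ and the closed-loop field are only locally Lipschitz, so the constants must hold on a region containing both $r(\tau+s)$ and $H(\mathcal X)$ over the whole horizon. I would handle this by taking $L_\omega$, $L_g$ and $L_u$ valid on a neighborhood $\mathcal N$ of $H(\mathcal X)$ (or globally, if the Lipschitz assumptions of Proposition \ref{thm:equivalence} are read that way). If one insists on a local neighborhood, a continuation argument is needed: as long as the right-hand side of the Gr\"onwall bound stays below the radius of $\mathcal N$, the trajectory cannot leave $\mathcal N$, so the estimate holds on that horizon. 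A second, minor point is that $\|e\|$ is not differentiable where $e=0$. I would avoid this by applying Gr\"onwall to the integral form
\begin{equation*}
e(s) = e(0) + \int_0^s(\cdots)\,\rd\sigma ,
\end{equation*}
which requires no differentiability of the norm.
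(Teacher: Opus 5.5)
Your proposal is correct and follows the paper's proof essentially step for step: the initial mismatch $\|e(\tau)\|\le\kappa e^{-\lambda\tau}\|r(0)-H(x(0))\|$ from Proposition~\ref{thm:equivalence}.A with norm equivalence, the add-and-subtract of $g(\bar r,\omega_\theta(\bar r))$ giving $L_g\|e\|+L_u\varepsilon$, Gr\"onwall on $[\tau,\tau+s]$, and the final bound $L_\omega\|e\|+\varepsilon$. Your remarks on the region where the Lipschitz constants must hold and on using the integral form of Gr\"onwall are sensible refinements that the paper leaves implicit.
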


\begin{proof}
Let $\bar r(t)=H(x(t))$, and let $r_\theta(t)$ denote the forecast trajectory satisfying $\dot r_\theta=g(r_\theta,\omega_\theta(r_\theta)), r_\theta(\tau)=r(\tau).$ Define $e(t)=r_\theta(t)-\bar r(t)$ for $t\geq\tau$. By Proposition III.1.A and equivalence of norms,
\begin{equation*}
    \|e(\tau)\|
    \leq \kappa e^{-\lambda\tau}
    \|r(0)-H(x(0))\|.
\end{equation*}
Since $\dot{\bar r}=g(\bar r,h(x))$, the Lipschitz bounds and the assumed readout error give
\begin{align*}
    \|\dot e(t)\|
    &\leq
    \|g(r_\theta,\omega_\theta(r_\theta))
      -g(\bar r,\omega_\theta(\bar r))\|\\
    &\quad+
    \|g(\bar r,\omega_\theta(\bar r))
      -g(\bar r,h(x))\| \leq L_g\|e(t)\|+L_u\varepsilon.
\end{align*}
Integrating and applying Gronwall's inequality yields
\begin{equation*}
    \|e(\tau+s)\|
    \leq e^{L_gs}\|e(\tau)\|
    +\frac{L_u\varepsilon}{L_g}
    \left(e^{L_gs}-1\right).
\end{equation*}

Finally, since $\hat u_\theta=\omega_\theta(r_\theta)$, the Lipschitz property of $\omega_\theta$ and the assumed readout error bound imply
\begin{equation*}
    \|\hat u_\theta(\tau+s)-h(x(\tau+s))\| \quad\leq L_\omega\|e(\tau+s)\|+\varepsilon
\end{equation*}
from which we obtain the stated bound \eqref{eq:forecast_bound}.
\end{proof}

The upper bound \eqref{eq:forecast_bound} on the forecasting error is made up of two terms: the first term captures the effect of mismatch of initial conditions from the synchronization manifold, and the second term captures the effects of imperfectly learned feedback map. Both are a source of error, and the upper bound on this error that we obtained grows exponentially in time.
Notice that synchronized initialization requires that $r(0) = H(x(0))$, and perfect synchronization map learning means $\varepsilon = 0$. In this idealized limit, both terms on the right hand side of the bound inequality \eqref{eq:forecast_bound} are zero, and we recover the perfect indefinite forecasting on the synchronization manifold that was established in Proposition \ref{thm:equivalence}.B.

\section{Why linear regression works for RC training}

The most pronounced difference between the KKL and RC approaches is how the output map $\omega_{\rm out}$ is constructed in practical applications. In the KKL literature, both model-based and model-free constructions focus on approximating the solution to the nonlinear PDE \eqref{eq:KKL_PDE}, which can be challenging in general. Meanwhile, a puzzling insight from the RC literature is that frequently, one can get away with simply learning a linear feedback law via least squares training \eqref{eq:lsq_prob},\eqref{eq:Wout}. This is despite the fact that RC typically deploys nonlinear reservoirs whose synchronization map must satisfy an even more complcated nonlinear PDE condition \eqref{eq:GS_PDE}. Moreover, the efficient least squares training step is often advertised as a selling point of RC over more traditional gradient-based approaches for training RNNs. Reservoirs trained this way have been shown empirically to work robustly once appropriately tuned, including to produce long forecasts of complex spatiotemporally chaotic dynamics \cite{pathak2018model}. This empirical evidence seemingly contradicts intuition from Proposition \ref{prop:forecast_bound} where we saw that mismatch between the true feedback map on the GS manifold and an approximate one can propagate forecasting error. To reconcile these two observations, we next ask, how close to linearity are the solutions to the contracting KKL/GS PDE \eqref{eq:GS_PDE} and the readout map $\omega_{\rm out}$ constructed from them? 

To attempt to address this question head on we first introduce some additional notation. Suppose $H(x)$ is the generalized synchronization map for \eqref{eq:sys},\eqref{eq:reservoir}, and suppose we want to approximate it with a linear map $\bar{H}(x) = \Pi x$ with $\Pi \in \mathbb{R}^{n_r \times n}$. Substituting this approximation into the GS PDE \eqref{eq:GS_PDE} we can write down the PDE residual 
\begin{equation}
    \rho_{\Pi}(x) = \Pi f(x) - g(\Pi x, h(x)), \quad x \in \mathcal{X}. \label{eq:residual}
\end{equation}
Then we can define the magintude of the residual \eqref{eq:residual} as 
\begin{equation}
    R(\Pi) = \sup_{x \in \mathcal{X}} \| \rho_{\Pi}(x) \|. \label{eq:residual_norm}
\end{equation}
Additionally, for a matrix $W \in \mathbb{R}^{n_u \times n_r}$, define
\begin{equation}
    \eta(\Pi,W)
    = \sup_{x \in \mathcal{X}} \|h(x)-W\Pi x\|.
    \label{eq:eta}
\end{equation}
The following proposition bounds the departure of $H$ from linearity and the error of the linear readout learned from data.

\begin{proposition}[Gap from Linearity of KKL PDE Solutions and Output Readout] Consider \eqref{eq:sys},\eqref{eq:reservoir} under the hypotheses of Proposition \ref{thm:equivalence}. Assume additionally that $\mathcal{X}$ is invariant for all $t \in \mathbb{R}$ for the dynamics \eqref{eq:sys} and that $H$ is a $C^1$ solution of \eqref{eq:GS_PDE}. Then, for every $\Pi \in \mathbb{R}^{n_r \times n}$,

\begin{equation}
    \sup_{x \in \mathcal{X}} \|H(x) - \Pi x\| \leq \frac{\kappa R(\Pi)}{\lambda} \label{eq:GS_linearity_bound}
\end{equation}
where $R(\Pi)$ is defined by \eqref{eq:residual_norm}, $\lambda = \alpha/(2 \lambda_{max}(P))$ is the reservoir contraction rate and $\kappa = \sqrt{\lambda_{max}(P)/\lambda_{min}(P)}$.
Furthermore, let $\omega_{\out}$ be a global minimizer of \eqref{eq:reservoir_loss}, and let $W_{\out}$ solve \eqref{eq:lsq_prob} using data collected over $[0,\tau]$, with $\tau>0$ and synchronized initialization $r(0)=H(x(0))$. The learned linear readout
$\omega_\theta(r)=W_{\out}r$ satisfies
\begin{equation}
    d_\tau(\omega_\theta,\omega_{\out})
    \leq \inf_W \Bigl\{\eta(\Pi,W)
    +\frac{\kappa R(\Pi)}{\lambda}\|W\|\Bigr\},
\label{eq:trained_readout_bound}
\end{equation}
where the infimum is over
$W \in \mathbb{R}^{n_u \times n_r}$ and $d_\tau$ is defined
in \eqref{eq:output_dist}.
\end{proposition}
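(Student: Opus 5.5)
The plan is to compare $H$ with the linear candidate $\Pi x$ along trajectories. The key tool is the contraction estimate of Proposition \ref{thm:equivalence}.A, extended to a perturbed system, together with backward invariance of $\mathcal{X}$ to push initial errors into the infinite past.

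Fix $x_0 \in \mathcal{X}$. Since $\mathcal{X}$ is invariant for all $t \in \mathbb{R}$, the trajectory $x(t)=\phi_t(x_0)$ exists in $\mathcal{X}$ for all $t\le 0$. For a starting time $-T<0$, define $\bar r(t)=H(x(t))$ and $q(t)=\Pi x(t)$ on $[-T,0]$. By \eqref{eq:GS_PDE} we have $\dot{\bar r}=g(\bar r,h(x))$. By \eqref{eq:residual} we have $\dot q = \Pi f(x) = g(q,h(x)) + \rho_\Pi(x)$, so $q$ solves the driven reservoir up to a perturbation bounded by $R(\Pi)$.

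Let $e=\bar r - q$ and $V=e^\top P e$. By the mean value form of $g(\bar r,h)-g(q,h)=\int_0^1 \partial_r g(q+se,h)\,\rd s\; e$ and \eqref{eq:contraction_reservoir},
\[
\dot V \le -\alpha\|e\|^2 + 2\|Pe\|\,R(\Pi).
\]
Writing $\|e\|_P=\sqrt V$ and using $\|Pe\|\le \sqrt{\lambda_{max}(P)}\,\|e\|_P$ and $\|e\|^2\ge \|e\|_P^2/\lambda_{max}(P)$, this gives
\[
\tfrac{\rd}{\rd t}\|e\|_P \le -\lambda\|e\|_P + \sqrt{\lambda_{max}(P)}\,R(\Pi).
\]
Division by $\|e\|_P$ is only valid where it is nonzero; I will handle this with the standard $\sqrt{V+\epsilon}$ regularization or a Dini-derivative argument. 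Comparison then yields
\[
\|e(0)\|_P \le e^{-\lambda T}\|e(-T)\|_P + \frac{\sqrt{\lambda_{max}(P)}\,R(\Pi)}{\lambda}.
\]
Because $H$ is continuous and $\mathcal{X}$ is compact, $\|e(-T)\|$ is bounded uniformly in $T$. Letting $T\to\infty$ and converting back to the Euclidean norm with $\|e\|\le \|e\|_P/\sqrt{\lambda_{min}(P)}$ gives $\|H(x_0)-\Pi x_0\|\le \kappa R(\Pi)/\lambda$. Taking the supremum over $x_0$ proves \eqref{eq:GS_linearity_bound}.

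The main obstacle is the constant bookkeeping: I must check that the factors of $P$ combine to exactly $\kappa$ rather than, say, $\lambda_{max}(P)/\sqrt{\lambda_{min}(P)}$. If they do not, I will use the Euclidean-norm growth estimate of Part A with a variation-of-constants form of the perturbation. Concretely, the difference of two solutions of the driven reservoir contracts with factor $\kappa e^{-\lambda t}$, and integrating the perturbation against this transition bound gives $\int_0^\infty \kappa e^{-\lambda s}R\,\rd s=\kappa R/\lambda$ directly. This route has its own cost: the perturbation enters the nonlinear flow, so I will need an incremental (Alekseev-type or contraction-based) comparison argument to justify the integral.

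For \eqref{eq:trained_readout_bound}, with synchronized initialization we have $r(t)=H(x(t))$. By Proposition \ref{thm:equivalence}.B, $\omega_{\out}(H(x))=h(x)$, so $d_\tau(\omega_\theta,\omega_{\out})=\|u-W_{\out}r\|_\tau$. Since $W_{\out}$ minimizes \eqref{eq:lsq_prob}, for any $W$ we have $\|u-W_{\out}r\|_\tau\le \|h(x)-WH(x)\|_\tau$. Adding and subtracting $W\Pi x$ gives
\[
\|h(x)-WH(x)\|_\tau \le \sup_{\mathcal{X}}\|h(x)-W\Pi x\| + \|W\|\sup_{\mathcal{X}}\|\Pi x-H(x)\|,
\]
which is at most $\eta(\Pi,W)+\|W\|\kappa R(\Pi)/\lambda$ by \eqref{eq:GS_linearity_bound}. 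Taking the infimum over $W$ completes the argument.
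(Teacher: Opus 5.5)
Your proposal is correct and follows essentially the paper's route. You treat $H(x(t))$ and $\Pi x(t)$ as solutions of the driven reservoir without and with the additive perturbation $\rho_\Pi(x(t))$ on a backward window $[-T,0]$ (available by two-sided invariance of $\mathcal{X}$), let $T\to\infty$ using finiteness of $\sup_{\mathcal{X}}\|H(x)-\Pi x\|$, and then obtain the readout bound from least-squares optimality plus the triangle inequality. The only difference is that you derive the incremental input-to-state estimate directly from $V=e^\top P e$, where the paper cites the contraction theorem of Davydov et al.

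Your worry about the constants is unfounded. Your own inequalities already give $\|e(0)\|_P \le \sqrt{\lambda_{max}(P)}\,R(\Pi)/\lambda$, and dividing by $\sqrt{\lambda_{min}(P)}$ yields exactly $\kappa R(\Pi)/\lambda$. The fallback variation-of-constants argument is therefore not needed.
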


\begin{proof}
   Define $E=\sup_{x \in \mathcal{X}}\|H(x)-\Pi x\|$, which is finite by compactness and continuity. Along any complete plant trajectory in $\mathcal{X}$,  by \eqref{eq:GS_PDE} and \eqref{eq:residual} , $H(x(t))$ and $\Pi x(t)$ solve
\begin{equation*}
    \dot r=g(r,h(x(t)))+v(t)
\end{equation*}
with respective additive inputs $v(t)=0$ and $v(t)=\rho_\Pi(x(t))$. Applying \cite[Theorem~37(ii)]{davydov2022non} in the $P$-weighted norm over $[-s,0]$, using norm equivalence, and taking the supremum over
$x(0)\in\mathcal{X}$ gives
\begin{equation*}
    E \leq \kappa e^{-\lambda s}E
    +\frac{\kappa R(\Pi)}{\lambda}
    (1-e^{-\lambda s}), \qquad s>0.
\end{equation*}
Letting $s\to\infty$ we then arrive at \eqref{eq:GS_linearity_bound}.
By Proposition~\ref{thm:equivalence}.A--B, synchronized initialization gives $r(t)=H(x(t))$ and $\omega_{\out}(r(t))=u(t)$. Thus, for any comparison matrix $W$, optimality in \eqref{eq:lsq_prob} and the triangle inequality yield
\begin{multline*}
    d_\tau(\omega_\theta,\omega_{\out}) =\|u-W_{\out}r\|_\tau \leq \|u-Wr\|_\tau \\
    \leq \|h(x)-W\Pi x\|_\tau +\|W\|\,\|H(x)-\Pi x\|_\tau  \\
    \leq \eta(\Pi,W)  +\frac{\kappa R(\Pi)}{\lambda}\|W\|.
\end{multline*}
Taking the infimum over $W$ we arrive at \eqref{eq:trained_readout_bound}.
\end{proof}

The bound \eqref{eq:trained_readout_bound} clarifies how ``far'' a trained linear readout is from a true nonlinear reconstruction map. We saw in Proposition \ref{prop:forecast_bound} that a large mismatch propagates to a large error bound. We want this quantity to then be as small as possible to enable effective forecasting with linear least squares training. The term $\eta(\Pi,W)$ in the bound measures the mismatch of the output map $h(x)$ in the linear reconstruction, while $\|W\| \kappa R(\Pi)/\lambda$ bounds the amplification of the encoding error due to the mismatch in geometry between the synchronization map and the learned readout. From the bound statement we see that stronger contraction rate alone does not necessarily guarantee better forecasting accuracy, since all other quantities in the bound also evolve with reservoir parameters. This upper bound provides a potential constructive condition that can be used to tune reservoir parameters to promote accurate forecasting with linear least squares training, for example using optimization. It also provides a partial explanation for the empirical success of least-squares trained reservoirs, since an approximately linear synchronization map, together with accurate output reconstruction using weights of moderate norm, guarantees a small error for the learned linear readout.

\section{Discussion}

In this paper we derived an interpretation of reservoir computing, a machine learning technique, through the lens of KKL observer theory. We hope this work contributes to an increased dialogue between practitioners of RC and the observer design community within systems and control. Although these fields have both evolved largely in parallel, their closely related mathematical foundations offer substantial opportunities for future collaborative progress. The mature KKL literature offers decades of results on embedding, reconstruction, convergence, and numerical construction of observers that can inform both the analysis and the design of reservoirs. 
Conversely, empirical evidence of efficient readout training and the successful usage of diverse reservoir architectures suggest new directions for observer theory and implementation. The RC connection also points to how observer theory could guide the selection and tuning of physical reservoirs, harnessing their intrinsic dynamics for signal reconstruction and prediction, expanding the impact of observer theory to the domain of nonconventional computing. In neuroscience, through the RC connection observer theory can offer a framework for investigating how reservoir-like neural circuits could build internal representations of their environment from sensory input, with learned readouts and feedback supporting estimation and prediction. In this paper we make explicit the shared strucutre of the RC and KKL problems, hoping to facilitate the transfer of existing insights and stimulate new questions on both sides.





\section*{Acknowledgments}
During the preparation of this paper, the authors became aware of closely related work developed independently and concurrently  \cite{2026Saqib}. We thank the authors for a friendly correspondence upon discovering the overlap, and we look forward to future collaboration.

The authors acknowledge the use of ChatGPT Astra for helping identify several references that proved useful in developing the ideas in the manuscript, and for feedback on correctness of several mathematical statements. The authors take full responsibility for the originality, accuracy, and integrity of the presented work.

\bibliographystyle{ieeetr}
\bibliography{references}

@article{pathak2018model,
  title={Model-free prediction of large spatiotemporally chaotic systems from data: A reservoir computing approach},
  author={Pathak, Jaideep and Hunt, Brian and Girvan, Michelle and Lu, Zhixin and Ott, Edward},
  journal={Physical review letters},
  volume={120},
  number={2},
  pages={024102},
  year={2018},
  publisher={APS}
}

@article{jaeger2001echo,
  title={The ``echo state” approach to analysing and training recurrent neural networks-with an erratum note},
  author={Jaeger, Herbert},
  journal={Bonn, Germany: German National Research Center for Information Technology GMD Technical Report},
  volume={148},
  number={34},
  pages={13},
  year={2001},
  publisher={Bonn}
}

@article{bollt2021explaining,
  title={On explaining the surprising success of reservoir computing forecaster of chaos? The universal machine learning dynamical system with contrast to {VAR} and {DMD}},
  author={Bollt, Erik},
  journal={Chaos: An Interdisciplinary Journal of Nonlinear Science},
  volume={31},
  number={1},
  year={2021},
  publisher={AIP Publishing}
}

@article{hart2023generalised,
  title={Generalised synchronisations, embeddings, and approximations for continuous time reservoir computers},
  author={Hart, Allen G},
  journal={Physica D: Nonlinear Phenomena},
  pages={133956},
  year={2023},
  publisher={Elsevier}
}

@article{grigoryeva2023learning,
  title={Learning strange attractors with reservoir systems},
  author={Grigoryeva, Lyudmila and Hart, Allen and Ortega, Juan-Pablo},
  journal={Nonlinearity},
  volume={36},
  number={9},
  pages={4674},
  year={2023},
  publisher={IOP Publishing}
}

@article{grigoryeva2018echo,
  title={Echo state networks are universal},
  author={Grigoryeva, Lyudmila and Ortega, Juan-Pablo},
  journal={Neural Networks},
  volume={108},
  pages={495--508},
  year={2018},
  publisher={Elsevier}
}

@article{gonon2021fading,
  title={Fading memory echo state networks are universal},
  author={Gonon, Lukas and Ortega, Juan-Pablo},
  journal={Neural Networks},
  volume={138},
  pages={10--13},
  year={2021},
  publisher={Elsevier}
}

@article{hart2020embedding,
  title={Embedding and approximation theorems for echo state networks},
  author={Hart, Allen and Hook, James and Dawes, Jonathan},
  journal={Neural Networks},
  volume={128},
  pages={234--247},
  year={2020},
  publisher={Elsevier}
}

@article{davydov2022non,
  title={Non-Euclidean contraction theory for robust nonlinear stability},
  author={Davydov, Alexander and Jafarpour, Saber and Bullo, Francesco},
  journal={IEEE Transactions on Automatic Control},
  volume={67},
  number={12},
  pages={6667--6681},
  year={2022},
  publisher={IEEE}
}

@article{enel2016reservoir,
  title={Reservoir computing properties of neural dynamics in prefrontal cortex},
  author={Enel, Pierre and Procyk, Emmanuel and Quilodran, Ren{\'e} and Dominey, Peter Ford},
  journal={PLoS computational biology},
  volume={12},
  number={6},
  pages={e1004967},
  year={2016},
  publisher={Public Library of Science}
}

@article{hart2024generalised,
  title={Generalised synchronisations, embeddings, and approximations for continuous time reservoir computers},
  author={Hart, Allen G},
  journal={Physica D: Nonlinear Phenomena},
  volume={458},
  pages={133956},
  year={2024},
  publisher={Elsevier}
}

@article{kocarev1996generalized,
  title={Generalized synchronization, predictability, and equivalence of unidirectionally coupled dynamical systems},
  author={Kocarev, Ljupco and Parlitz, Ulrich},
  journal={Physical review letters},
  volume={76},
  number={11},
  pages={1816},
  year={1996},
  publisher={APS}
}

@article{yildiz2012re,
  title={Re-visiting the echo state property},
  author={Yildiz, Izzet B and Jaeger, Herbert and Kiebel, Stefan J},
  journal={Neural networks},
  volume={35},
  pages={1--9},
  year={2012},
  publisher={Elsevier}
}

@article{grigoryeva2024data,
  title={Data-driven cold starting of good reservoirs},
  author={Grigoryeva, Lyudmila and Hamzi, Boumediene and Kemeth, Felix P and Kevrekidis, Yannis and Manjunath, G and Ortega, Juan-Pablo and Steynberg, Matthys J},
  journal={arXiv preprint arXiv:2403.10325},
  year={2024}
}

@article{tanaka2019recent,
  title={Recent advances in physical reservoir computing: A review},
  author={Tanaka, Gouhei and Yamane, Toshiyuki and H{\'e}roux, Jean Benoit and Nakane, Ryosho and Kanazawa, Naoki and Takeda, Seiji and Numata, Hidetoshi and Nakano, Daiju and Hirose, Akira},
  journal={Neural Networks},
  volume={115},
  pages={100--123},
  year={2019},
  publisher={Elsevier}
}

@article{maass2002real,
  title={Real-time computing without stable states: A new framework for neural computation based on perturbations},
  author={Maass, Wolfgang and Natschl{\"a}ger, Thomas and Markram, Henry},
  journal={Neural computation},
  volume={14},
  number={11},
  pages={2531--2560},
  year={2002},
  publisher={MIT Press}
}

@article{yan2024emerging,
  title={Emerging opportunities and challenges for the future of reservoir computing},
  author={Yan, Min and Huang, Can and Bienstman, Peter and Tino, Peter and Lin, Wei and Sun, Jie},
  journal={Nature Communications},
  volume={15},
  number={1},
  pages={2056},
  year={2024},
  publisher={Nature Publishing Group UK London}
}

@article{manjunath2013echo,
  title={Echo state property linked to an input: Exploring a fundamental characteristic of recurrent neural networks},
  author={Manjunath, Gandhi and Jaeger, Herbert},
  journal={Neural computation},
  volume={25},
  number={3},
  pages={671--696},
  year={2013},
  publisher={MIT Press}
}

@article{kazantzis1998,
title = {Nonlinear observer design using {Lyapunov's} auxiliary theorem},
journal = {Systems \& Control Letters},
volume = {34},
number = {5},
pages = {241-247},
year = {1998},
issn = {0167-6911},
doi = {https://doi.org/10.1016/S0167-6911(98)00017-6},
url = {https://www.sciencedirect.com/science/article/pii/S0167691198000176},
author = {Nikolaos Kazantzis and Costas Kravaris}
}

@article{andrieu2006,
author = {Andrieu, Vincent and Praly, Laurent},
title = {On the Existence of a {Kazantzis}--{Kravaris}/{Luenberger} Observer},
journal = {SIAM Journal on Control and Optimization},
volume = {45},
number = {2},
pages = {432-456},
year = {2006},
doi = {10.1137/040617066}
}

@article{bernard2022,
title = {Observer design for continuous-time dynamical systems},
journal = {Annual Reviews in Control},
volume = {53},
pages = {224-248},
year = {2022},
issn = {1367-5788},
doi = {https://doi.org/10.1016/j.arcontrol.2021.11.002},
author = {Pauline Bernard and Vincent Andrieu and Daniele Astolfi}
}

@article{pachy2024existence,
  title={On the existence of {KKL} observers with nonlinear contracting dynamics},
  author={Pachy, Victor and Andrieu, Vincent and Bernard, Pauline and Brivadis, Lucas and Praly, Laurent},
  journal={IFAC-PapersOnLine},
  volume={58},
  number={21},
  pages={262--267},
  year={2024},
  publisher={Elsevier}
}

@article{luenberger1966observers,
  title={Observers for multivariable systems},
  author={Luenberger, David},
  journal={IEEE transactions on automatic control},
  volume={11},
  number={2},
  pages={190--197},
  year={1966},
  publisher={IEEE}
}

@article{lucia2026learning,
  title={Learning a Contracting {KKL}-observer with Local Optimal Guarantees},
  author={Luc{\'\i}a Galimberti, Clara and Peralez, Johan and Astolfi, Daniele and Andrieu, Vincent and Nadri, Madiha},
  journal={arXiv},
  year={2026}
}

@article{trumpf2014internal,
  title={Internal model principles for observers},
  author={Trumpf, Jochen and Trentelman, Harry L and Willems, Jan C},
  journal={IEEE Transactions on Automatic Control},
  volume={59},
  number={7},
  pages={1737--1749},
  year={2014},
  publisher={IEEE}
}

@article{francis1976internal,
  title={The internal model principle of control theory},
  author={Francis, Bruce A and Wonham, Walter Murray},
  journal={Automatica},
  volume={12},
  number={5},
  pages={457--465},
  year={1976},
  publisher={Elsevier}
}

@article{bin2022internal,
  title={Internal models in control, bioengineering, and neuroscience},
  author={Bin, Michelangelo and Huang, Jie and Isidori, Alberto and Marconi, Lorenzo and Mischiati, Matteo and Sontag, Eduardo},
  journal={Annual Review of Control, Robotics, and Autonomous Systems},
  volume={5},
  number={1},
  pages={55--79},
  year={2022},
  publisher={Annual Reviews}
}

@article{kreisselmeier2003nonlinear,
  title={Nonlinear observers for autonomous Lipschitz continuous systems},
  author={Kreisselmeier, Gerhard and Engel, Robert},
  journal={IEEE Transactions on Automatic Control},
  volume={48},
  number={3},
  pages={451--464},
  year={2003},
  publisher={IEEE}
}

@article{ohkubo2024reservoir,
  title={Reservoir computing with generalized readout based on generalized synchronization},
  author={Ohkubo, Akane and Inubushi, Masanobu},
  journal={Scientific reports},
  volume={14},
  number={1},
  pages={30918},
  year={2024},
  publisher={Nature Publishing Group UK London}
}

@inproceedings{brivadis2019luenberger,
  title={Luenberger observers for discrete-time nonlinear systems},
  author={Brivadis, Lucas and Andrieu, Vincent and Serres, Ulysse},
  booktitle={Conference on Decision and Control},
  pages={3435--3440},
  year={2019},
  organization={IEEE}
}

@article{lu2017reservoir,
  title={Reservoir observers: Model-free inference of unmeasured variables in chaotic systems},
  author={Lu, Zhixin and Pathak, Jaideep and Hunt, Brian and Girvan, Michelle and Brockett, Roger and Ott, Edward},
  journal={Chaos: An Interdisciplinary Journal of Nonlinear Science},
  volume={27},
  number={4},
  year={2017},
  publisher={AIP Publishing}
}

@article{liu2026reservoir,
  title={Reservoir observer enhanced with residual calibration and attention mechanism},
  author={Liu, Yichen and Xiao, Wei and Chu, Tianguang},
  journal={Physical Review E},
  volume={113},
  number={3},
  pages={034201},
  year={2026},
  publisher={APS}
}

@inproceedings{peralez2024deep,
  title={Deep model-free {KKL} observer: A switching approach},
  author={Peralez, Johan and Nadri, Madiha},
  booktitle={6th Annual Learning for Dynamics \& Control Conference},
  pages={929--940},
  year={2024},
  organization={PMLR}
}

@inproceedings{miao2023learning,
  title={Learning robust state observers using neural odes},
  author={Miao, Keyan and Gatsis, Konstantinos},
  booktitle={Learning for Dynamics and Control Conference},
  pages={208--219},
  year={2023},
  organization={PMLR}
}

@article{tang2023data,
  title={Data-driven state observation for nonlinear systems based on online learning},
  author={Tang, Wentao},
  journal={AIChE Journal},
  volume={69},
  number={12},
  pages={e18224},
  year={2023},
  publisher={Wiley Online Library}
}

@inproceedings{tang2024synthesis,
  title={Synthesis of data-driven nonlinear state observers using lipschitz-bounded neural networks},
  author={Tang, Wentao},
  booktitle={American Control Conference},
  pages={1713--1719},
  year={2024},
  organization={IEEE}
}

@article{hart2025generic,
  title={Generic and isometric embeddings in reservoir computers},
  author={Hart, Allen G},
  journal={Chaos: An Interdisciplinary Journal of Nonlinear Science},
  volume={35},
  number={11},
  year={2025},
  publisher={AIP Publishing}
}

@inproceedings{2026Saqib,
    author = {Saqib, Danyal and Padoan, Alberto},
    title = {The internal model principle in reservoir computing},
    booktitle = {{arXiv}},
    year = {2026} 
}

@article{marconi2007output,
  title={Output stabilization via nonlinear Luenberger observers},
  author={Marconi, Lorenzo and Praly, Laurent and Isidori, Alberto},
  journal={SIAM Journal on Control and Optimization},
  volume={45},
  number={6},
  pages={2277--2298},
  year={2007},
  publisher={SIAM}
}

@inproceedings{janny2021deep,
  title={Deep {KKL}: Data-driven output prediction for non-linear systems},
  author={Janny, Steeven and Andrieu, Vincent and Nadri, Madiha and Wolf, Christian},
  booktitle={Conference on Decision and Control},
  pages={4376--4381},
  year={2021},
  organization={IEEE}
}

@inproceedings{ramos2020numerical,
  title={Numerical design of Luenberger observers for nonlinear systems},
  author={Ramos, Louise da C and Di Meglio, Florent and Morgenthaler, Valery and da Silva, Lu{\'\i}s F Figueira and Bernard, Pauline},
  booktitle={IEEE Conference on Decision and Control},
  pages={5435--5442},
  year={2020},
  organization={IEEE}
}

@article{isidori2010existence,
  title={About the existence of locally lipschitz output feedback stabilizers for nonlinear systems},
  author={Isidori, Alberto and Praly, Laurent and Marconi, Lorenzo},
  journal={SIAM journal on control and optimization},
  volume={48},
  number={5},
  pages={3389--3402},
  year={2010},
  publisher={SIAM}
}
\end{document}